\newtheorem{defn}{Definition}
\newtheorem{thm}{Theorem}[section]
\newtheorem{cor}[thm]{Corollary}
\newtheorem{prop}{Proposition}
\newtheorem{claim}{Claim}
\newtheorem{lem}[thm]{Lemma}
\newtheorem{conj}[thm]{Conjecture}
\newtheorem{constr}[thm]{Construction}
\newtheorem{note}{Remark}
\newtheorem{example}{Example}
\newcommand{\bit}{\begin{itemize}}
\newcommand{\eit}{\end{itemize}}
\newcommand{\bcor}{\begin{cor}}
\newcommand{\ecor}{\end{cor}}
\newcommand{\beq}{\begin{equation}}
\newcommand{\eeq}{\end{equation}}
\newcommand{\beqn}{\begin{equation*}}
\newcommand{\eeqn}{\end{equation*}}
\newcommand{\bea}{\begin{eqnarray}}
\newcommand{\eea}{\end{eqnarray}}
\newcommand{\bean}{\begin{eqnarray*}}
\newcommand{\eean}{\end{eqnarray*}}
\newcommand{\ben}{\begin{enumerate}}
\newcommand{\een}{\end{enumerate}}
\newcommand{\bdefn}{\begin{defn}}
\newcommand{\edefn}{\end{defn}}
\newcommand{\bnote}{\begin{note}}
\newcommand{\enote}{\end{note}}
\newcommand{\bprop}{\begin{prop}}
\newcommand{\eprop}{\end{prop}}
\newcommand{\blem}{\begin{lem}}
\newcommand{\elem}{\end{lem}}
\newcommand{\bthm}{\begin{thm}}
\newcommand{\ethm}{\end{thm}}
\newcommand{\bconj}{\begin{conj}}
\newcommand{\econj}{\end{conj}}
\newcommand{\bconstr}{\begin{constr}}
\newcommand{\econstr}{\end{constr}}
\newcommand{\bpf}{\begin{proof}}
\newcommand{\epf}{\end{proof}}
\begin{document}
	\title{Codes with Combined Locality and Regeneration Having Optimal Rate, $d_{\text{min}}$ and Linear Field Size}
	
	\author{
		\IEEEauthorblockN{M. Nikhil Krishnan, Anantha Narayanan R., and P. Vijay Kumar, \it{Fellow}, \it{IEEE}}
		
		\IEEEauthorblockA{Department of Electrical Communication Engineering, Indian Institute of Science, Bangalore.  \\ Email: \{nikhilkrishnan.m, ananthnarayananr, pvk1729\}@gmail.com} 
		\thanks{P. Vijay Kumar is also a Visiting Professor at the University of Southern California. This research is supported in part by the National Science Foundation under Grant 1421848 and in part by an India-Israel UGC-ISF joint research program grant. M. Nikhil Krishnan would like to acknowledge the support of Visvesvaraya PhD Scheme for Electronics \& IT awarded by Department of Electronics and Information Technology, Government of India.}
	}
	\maketitle

	\begin{abstract}
		In this paper, we study vector codes with all-symbol locality, where the local code is either a Minimum Bandwidth Regenerating (MBR) code or a Minimum Storage Regenerating (MSR) code. In the first part, we present vector codes with all-symbol MBR locality, for all parameters, that have both optimal minimum-distance and optimal rate. These codes combine ideas from two popular codes in the distributed storage literature; Product-Matrix codes and Tamo-Barg codes. In the second part which deals with codes having all-symbol MSR locality, we follow a Pairwise Coupling Transform-based approach to arrive at optimal minimum-distance and optimal rate, for a range of parameters. All the code constructions presented in this paper have a low field-size that grows linearly with the code-length $n$.  
	\end{abstract}
	
	\begin{IEEEkeywords}
		Regenerating codes, codes with locality, vector codes
	\end{IEEEkeywords}
	
	\section{Introduction}
	
In addition to the requirement of high storage-efficiency and reliability, there are two other important factors considered by Distributed Storage Systems (DSSs); (i) {\it repair bandwidth} incurred during a node-repair, and (ii) {\it repair degree}, which is the number of nodes contacted during a node-repair. Regenerating codes \cite{DimGodWuWaiRam} aim at minimizing the repair traffic, whereas codes with locality \cite{GopHuaSimYek} focus on reducing the number of nodes contacted during repair.
	
    In the regenerating code framework, a file of size $B$ symbols is encoded and stored across $n$ nodes, where each node stores $\alpha$ symbols. In the event of a node failure, the failed node can be regenerated by downloading $\beta\leq \alpha$ symbols each, from \textit{any} $d$ surviving nodes. Also, by accessing {\it any} $k$ nodes, the whole file can be retrieved. The parameters of a regenerating code are denoted by $((n,k,d),(\alpha,\beta),B)$. \cite{DimGodWuWaiRam} proves the existence of a trade-off between $\alpha$ (storage) and $d\beta$ (bandwidth) for given $n$, $k$, $d$, $\beta$ and file-size $B$. There are two codes belonging to the two extremal points in the trade-off, namely, \textit{Minimum Storage Regenerating (MSR) codes} and \textit{Minimum Bandwidth Regenerating (MBR) codes}, where $\alpha$ and $d\beta$ are minimized first respectively.
    
Under the codes-with-locality setting introduced by Gopalan et al. \cite{GopHuaSimYek}, an erased code-symbol can be repaired by accessing $r<k$ other symbols. This reduces the number of nodes accessed. The following minimum-distance bound is derived in \cite{GopHuaSimYek} for an $[n,k]$ linear code having $r$-locality: 
    \begin{equation}\label{eq:gopalan_bound}
    d_{\text{min}} \leq n-k- \lceil k/r \rceil +2
    \end{equation}
   The concept in \cite{GopHuaSimYek}, of having single parity check codes as local codes, is extended and stronger local codes are considered in \cite{KamPraLalKum}. Here local codes have a minimum-distance of at least $\delta$. The minimum-distance in this case, is upper bounded as:
	
	\begin{equation}\label{eq:kam_bound}
	d_{\text{min}} \leq n - k + 1 -(\Bigl\lceil\frac{k}{r}\Bigr\rceil - 1)(\delta - 1).
	\end{equation}
	  In \cite{TamBar}, Tamo and Barg provide a family of codes having locality that meets \eqref{eq:kam_bound}. 
	  
	 A natural question to ask at this point is, whether there exist codes which can simultaneously have a low repair bandwidth and a low repair degree. Kamath et al. \cite{KamPraLalKum} and Rawat et al. \cite{RawKoySilVis} answer this in the affirmative and present a new family of vector codes with locality, where the local codes are regenerating codes. These code constructions leverage the advantages of both regenerating codes (low repair bandwidth) and codes with locality (low repair degree). 
	  
	  In \cite{KamPraLalKum}, authors give minimum-distance bounds for general vector codes with locality and a tighter bound for the case when the local codes have Uniform Rank Accumulation (URA) property. Codes with MSR or MBR all-symbol locality and information-symbol locality, that meet the minimum-distance bound, are provided for various parameters. The field-size requirement is at least $O(n^2)$ for the all-symbol locality cases.  \cite{RawKoySilVis} presents an explicit construction of a vector code with MSR all-symbol locality, which requires a field-size exponential in $n$. In \cite{HashtagLocalRegen}, the authors construct a related family of vector codes with information-symbol locality, where the local codes are vector MDS codes with near-optimal bandwidth and small sub-packetization ($\alpha$) levels.
	  
	  {\it Our Results:} As a main result, we present a family of codes with all-symbol MBR locality, for {\it all} parameters. The construction is optimal with respect to the minimum-distance bound given in \cite{KamPraLalKum} and satisfies the rate-optimality property. Our results also include a family of codes having all-symbol MSR locality. These codes are shown to be optimal for a range of parameters. Both families of codes feature an $O(n)$ field-size, which is an improvement over prior work. 
	  
	\section{Preliminaries} 
	Let $[a,b]\triangleq\{a,a+1,\ldots,b\}$, $[a]\triangleq\{1,2,\ldots,a\}$. All the constructions are assumed to be linear and over $\mathbb{F}$, where $|\mathbb{F}|=q$.
	\subsection{Locality in Vector Codes}
	\begin{defn}{(Vector Codes)}
		A vector code $\mathcal{C}$ is a linear code over $\mathbb{F}$, with each codeword $\mathbf{c}\in\mathcal{C}$ taking the form:
		\begin{equation*}
		\mathbf{c}=(\mathbf{c}_0\ \mathbf{c}_1\ \ldots\ \mathbf{c}_{n-1}),
		\end{equation*} 
		where $\mathbf{c}_i\in \mathbb{F}^\alpha$, $0\leq i\leq (n-1)$, $\alpha\geq1$. i.e., each vector symbol $\mathbf{c}_i$ holds $\alpha$ scalar symbols.
	\end{defn}
    Consider the scalar code $\mathcal{C}_s$ of length $n\alpha$, obtained from $\mathcal{C}$, by expanding  each vector symbol $\mathbf{c}_i$ as $\alpha$ scalar symbols. Let $\mathbf{G}$ be a generator matrix for $\mathcal{C}_s$, where first $\alpha$ columns correspond to $\mathbf{c}_0$, the next $\alpha$ columns correspond to $\mathbf{c}_1$, and so on. Each set of $\alpha$ columns of $\mathbf{G}$ that corresponds to a vector symbol, is referred to as a {\it thick column}. The columns of $\mathbf{G}$ themselves will be referred to as {\it thin columns}. Hence, there are $\alpha$ thin columns within a thick column. Let $K$ denote the dimension of the code $\mathcal{C}_s$. The parameters of a vector code are denoted by $(n,K,d_{\text{min}},\alpha)$, where $d_{\text{min}}$ is the minimum-distance of $\mathcal{C}$, computed at the thick column level.
    
    For $S\subseteq[0,n-1]$, let $\mathcal{C}|_S$ denote the code obtained by puncturing (restricting) $\mathcal{C}$ to the set of thick columns $\{j:j\in\mathcal{S}\}$. In a similar manner, let $\mathbf{G}|_S$ be the restriction of the matrix $\mathbf{G}$ to the thick columns in $S$.
    
    \begin{defn}{($(r,\delta)$ Locality)}\label{def:r_delta_loc} For $i\in[0,n-1]$ and $\delta\geq 2$, the $i^{\text{th}}$ vector code symbol is said to have $(r,\delta)$ locality, if there exists an $S_i\subseteq[0,n-1]$ such that $i\in S_i$, $|S_i|\leq r+\delta-1$ and $d_{\text{min}}(\mathcal{C}|_{S_i})\geq \delta$. Any $\mathcal{C}|_{S_i}$ will be referred to as a local code.
    \end{defn}

    \begin{defn}{($(r,\delta)$ Information-Symbol Locality)} A vector code is said to have $(r,\delta)$ information-symbol locality if there exists $\mathcal{I}\subseteq[0,n-1]$ such that:
    	\bit
    	\item  $rank(\mathbf{G}|_{\mathcal{I}})=K$
    	\item For all $i\in\mathcal{I}$, $\mathbf{c}_i$ has $(r,\delta)$ locality.
    	\eit
    \end{defn}
 
    Furthermore, a vector code is said to have {\it$(r,\delta)$ all-symbol locality}, if for all $i\in[0,n-1]$, $\mathbf{c}_i$ has $(r,\delta)$ locality. If for a code having $(r,\delta)$ all-symbol locality, $S_i=S_j$ or $|S_i\cap S_j|=0$, for all $i\neq j$, $0\leq i,j\leq n-1$, then the code is said to have {\it disjoint locality}. All the code constructions presented in this paper have the disjoint locality property.
 
    \subsection{Codes with MBR/MSR Locality}
    
     A code with MSR or MBR locality \cite{KamPraLalKum} is an $(n,K,d_{\text{min}},\alpha)$ vector code with $(r,\delta)$ locality, where the local code is either MSR or MBR  with parameters $((n_\ell,r,d),(\alpha,\beta),K_\ell)$. Here $n_\ell\triangleq r+\delta-1$ and $K\geq K_\ell$. Let the local code be denoted by $\mathcal{C}_\text{loc}$, with an associated generator matrix $G_\text{loc}$. Both MSR and MBR codes belong to a class of Uniform Rank Accumulation (URA) codes, where there exists a non-increasing sequence of $n_\ell$ non-negative integers  $\{a_1,a_2,\ldots,a_{n_\ell}\}$ with the following properties (i) $a_1=\alpha$ (ii) $\text{rank}(\mathbf{G}_\text{loc}|_{\mathcal{I}}) = \sum_{j=1}^{i}a_j$, for all $\mathcal{I}\subseteq[0,n_\ell-1]$ such that $|\mathcal{I}|=i$. The sequence $\{a_i, i \in [n_\ell]\}$ is referred to as the rank profile of the vector code $\mathcal{C}_{\text{loc}}$.
     
     The rank profile of an $((n_\ell,r,d),(\alpha,\beta),K_\ell)$ MSR code is given by (see for example, \cite{ShaRasKumRam_rbt}):
    \begin{equation}\label{eq:msr_rank_profile}
    a_i = \left\{\begin{array}{lr}
    \alpha &  1\leq i\leq r\\
    0 & (r+1)\leq i\leq n_\ell
    \end{array}\right..
    \end{equation}
    For the MBR code, rank profile \cite{ShaRasKumRam_rbt} is as follows:
    
    \begin{equation}\label{eq:mbr_rank_profile}
    a_i = \left\{\begin{array}{lr}
    \alpha-(i-1)\beta &  1\leq i\leq r\\
    0 & (r+1)\leq i\leq n_\ell
    \end{array}\right..
    \end{equation}
   
   Define $b_{i+jn_\ell}\triangleq a_i$, where $1\leq i\leq n_\ell$ and $j\geq 0$. Let 
   \begin{equation}\label{eq:leading_sum}
   P(s)=\sum_{i=1}^sb_i:\ s\geq1.
   \end{equation}
   For $1\leq x\leq K$, let $P^{(\text{inv})}(x)=y$, where $y$ is the smallest integer such that $P(y)\geq x$. From \cite{KamPraLalKum}, we have the minimum-distance upper bound:
   \begin{equation}\label{eq:min_distance_bound}
   	d_{\text{min}} \leq n -P^{(\text{inv})}(K) + 1
   	\end{equation}
    A code satisfying \eqref{eq:min_distance_bound} with equality is defined to be {\it rate-optimal} \cite{KamPraLalKum}, if $K=P(s)$, for some $s\geq1$.
    For a code with MSR locality, one can simplify \eqref{eq:min_distance_bound} using \eqref{eq:msr_rank_profile} to obtain (\cite{KamPraLalKum}, \cite{RawKoySilVis}) :
    \begin{equation}\label{eq:msr_locality_bound}
    d_{\text{min}}\leq n-\bigg\lceil\frac{K}{\alpha}\bigg\rceil+1-\bigg(\bigg\lceil\frac{K}{\alpha r}\bigg\rceil-1\bigg)(\delta-1).
    \end{equation}

   \subsection{Product-Matrix (PM) MBR Codes}\label{recapSection}
	
	PM MBR codes \cite{RasShaKum_pm} exist for all $n,k,d$ and $\beta=1$. At the MBR point, $\alpha=d\beta=d$ and $B= kd\beta - {k \choose 2}\beta=kd-{k\choose 2}$. Consider a symmetric $(d\times d)$ message matrix $\mathbf{M}$: 
	\begin{align}
	\mathbf{M} &= 
	\begin{bmatrix}
	\bf{S} & \bf{T}\\           
	$\bf{T}$^t & \bf{0} 
	\end{bmatrix},\label{eq:PMmessagematrix}
	\end{align} 
	where $\bf{S}$ is a symmetric $k\times k$ matrix which can hold ${k+1\choose 2}$ independent scalar message symbols, $\bf{T}$ is a $k\times (d-k)$ matrix which can hold $k(d-k)$ independent scalar message symbols. Note that the quantities ${k+1\choose 2}$ and $k(d-k)$ add up to $B$. For $0\leq i\leq n-1$, the $i^{\text{th}}$ vector code symbol (to be stored at node-$i$) is given by $\mathbf{c}_i=\boldsymbol{\psi}_i\mathbf{M}$, where $\boldsymbol{\psi}_i$ takes the form: $\boldsymbol{\psi}_i=[1\  \alpha_i\  \alpha_i^2\ \ldots \ \alpha_i^{d-1} ]$.
	Here $\alpha_i$'s are chosen to be distinct. The PM MBR construction can be formulated as a polynomial evaluation code as described in the example below.

\begin{example}\label{eg:MBR_poly_eval}
	Let $n=5$, $k=3$ and $d=4$. The message matrix, $\mathbf{M}$ is given by:
	\begin{align}\nonumber
	\mathbf{M}= 
	\begin{bmatrix}
	m_{0,0} & m_{0,1} & m_{0,2} & m_{0,3}\\           
	m_{0,1} & m_{1,1} & m_{1,2} & m_{1,3}\\
	m_{0,2} & m_{1,2} & m_{2,2} & m_{2,3}\\
	m_{0,3} & m_{1,3} & m_{2,3} & 0
	\end{bmatrix}.
	\end{align}
    The symbols stored at node-$i$,
	$\boldsymbol{c}_i=[1\ \ \alpha_i\ \ \alpha_i^2\ \ \alpha_i^3]\mathbf{M}$
	can be alternatively viewed as the evaluation of a vector of polynomials $[{m}_0(x)\ {m}_1(x)\ {m}_2(x)\ {m}_3(x)]$ at $\alpha_i$, where ${m}_0(x)\triangleq m_{0,0}+m_{0,1}x+m_{0,2}x^2+m_{0,3}x^3$, ${m}_1(x)\triangleq m_{0,1}+m_{1,1}x+m_{1,2}x^2+m_{1,3}x^3$, ${m}_2(x)\triangleq m_{0,2}+m_{1,2}x+m_{2,2}x^2+m_{2,3}x^3$ and ${m}_3(x)\triangleq m_{0,3}+m_{1,3}x+m_{2,3}x^2$.
\end{example}

\subsection{Tamo-Barg (TB) Codes}\label{sec:tb_code_recap}

In this section, we summarize an $[n,k,d_{\text{min}}]$ scalar (i.e., $\alpha=1$) linear code construction with $(r,\delta)$ all-symbol locality, introduced in \cite{TamBar}, where $(r+\delta-1)|n$,  $n_\ell\triangleq r+\delta-1$, $k\leq \frac{nr}{n_\ell}$. We refer to this as the {\it Tamo-Barg} (TB) code. The construction is minimum-distance optimal with respect to \eqref{eq:kam_bound}. 

Let $n|(q-1)$, $\nu\triangleq\frac{n}{n_{\ell}}$, $\gamma$ be a primitive $n^{\text{th}}$ root of unity and $\mathcal{A}\triangleq\{1,\gamma,\gamma^2,\ldots,\gamma^{n-1}\}\subset\mathbb{F}$. Each codeword  of the TB code will correspond to $n$ evaluations of some polynomial $M(x)$ at the points in $\mathcal{A}$. $M(x)$ belongs to a $k$-dimensional subspace $\mathcal{M}$, of the vector space of polynomials over $\mathbb{F}$ with degree at most $(n-1)$. In the following, we describe the construction of $\mathcal{M}$.


Consider the partition of $\mathcal{A}$ into the multiplicative subgroup $\mathcal{A}^{(0)}=\{1,\gamma^{\nu},\ldots,(\gamma^{\nu})^{n_{\ell}-1}\}$ and its cosets $\mathcal{A}^{(i)}=\gamma^i\mathcal{A}^{(0)}$, for $0\leq i\leq \nu-1$. For each $\mathcal{A}^{(i)}$, let $f^{(i)}$ denote the {\it annihilating polynomial}, i.e., $f^{(i)}(x)=\prod_{\theta\in \mathcal{A}^{(i)}}(x-\theta)=(x^{n_l}-(\gamma^i)^{n_l})$. Clearly, $f^{(i)}(x)$'s are pairwise co-prime. Let $F(x)\triangleq\prod_{i=0}^{\nu-1}f^{(i)}(x)=(x^{n}-1)$. By applying the Chinese Remainder Theorem (CRT), one can observe the following isomorphism:
\begin{equation}
\mathbb{F}(x)/(F)\approxeq \mathbb{F}(x)/(f^{(0)}) \times \mathbb{F}(x)/(f^{(1)}) \times \ldots \times \mathbb{F}(x)/(f^{(\nu-1)}), \label{eq:CRT_isomorphism}
\end{equation}
where $\mathbb{F}[x]$ is the ring of polynomials over $\mathbb{F}$.

Consider $\nu$ polynomials of degree at most $(r-1)$, $\tilde{m}^{(i)}(x)$ for $0\leq i\leq \nu-1$. Think of them as a vector of $\nu$ polynomials belonging to a vector space $\mathcal{\tilde{M}}^{(\nu)}$, of dimension $r\nu$. Applying CRT, one can find the unique polynomial $\tilde{M}(x)$ of degree at most $(n-1)$ such that:

\begin{equation}\label{eq:poly_lifting}
\tilde{M}(x) \mod f^{(i)}(x)=  \tilde{m}^{(i)}(x),
\end{equation} 
for all $0\leq i\leq \nu-1$. The process of obtaining $\tilde{M}(x)$ from $\tilde{m}^{(i)}(x)$'s is termed as {\it polynomial lifting}. There exist (\cite{hierarchical}) $\{e^{(i)}(x)\}_{i=0}^{\nu-1}$, where each $e^{(i)}(x) \in \mathbb{F}[x]$, has degree $n_\ell(\nu-1)$ and satisfies:
\[
e^{(i)}(x)\mod f^{(j)}(x) = \left\{\begin{array}{lr}
1 &  j=i\\
0 & j\neq i
\end{array}\right..
\]
Moreover, each $e^{(i)}(x)$ takes the form: $e^{(i)}(x)=\sum_{j=0}^{\nu-1}e^{(i)}_j(x^{n_{\ell}})^j$. Clearly, $\tilde{M}(x)=\sum_{i=0}^{\nu-1}\tilde{m}^{(i)}(x)e^{(i)}(x)$.

\begin{note}\label{remark:eval_match}
	From the definition of $e^{(i)}(x)$, it is easy to see that:
	\begin{equation*}\label{eq:eval_match}
	\tilde{m}^{(i)}(a)=\tilde{M}(a)\ \  \forall a\in \mathcal{A}^{(i)}
	\end{equation*}
\end{note}

We describe the CRT-based TB code with the help of the following example.

\begin{example}\normalfont\label{eg:TB_code_example}
	Consider the parameters $n=15$, $k=6$, $r=3$, $\delta=3$ and let $|\mathbb{F}|=16$. Here $n_l=5$, $\nu=3$, $\tilde{m}^{(i)}(x)=\tilde{m}^{(i)}_{0}+\\
	\tilde{m}^{(i)}_{1}x+\tilde{m}^{(i)}_{2}x^2, \mathcal{A}^{(i)}=\gamma^i\{1,\gamma^3,\gamma^6,\gamma^9,\gamma^{12}\},e^{(i)}(x)=e^{(i)}_{0}+e^{(i)}_{1}x^5+e^{(i)}_{2}x^{10}$, $0\leq i\leq 2$. $\tilde{M}(x)$ is given by:
	
	\begin{eqnarray*}
	\tilde{M}(x) & = & 	\sum_{i=0}^{2}\tilde{m}^{(i)}(x)e^{(i)}(x) \\
	& = & \sum_{i=0}^{2}\tilde{m}^{(i)}(x)e^{(i)}_{0}+ \bigg[\sum_{i=0}^{2}\tilde{m}^{(i)}(x)e^{(i)}_{1}\bigg]x^5 \\
	&  & + \bigg[\sum_{i=0}^{2}\tilde{m}^{(i)}(x)e^{(i)}_{2}\bigg]x^{10}
	\end{eqnarray*}
	Let $\tilde{M}(x)\triangleq\sum_{j=0}^{n-1}\tilde{M}_jx^j$. For $0\leq j\leq (n-1)$, $0\leq a\leq \nu-1$ and $0\leq b\leq n_\ell-1$,  let $j=an_\ell+b$. We have:
	\begin{equation}\label{eq:crt_map}
	\tilde{M}_j  =
	\begin{cases} 
	\sum_{s=0}^{\nu-1}e^{(s)}_{a}\tilde{m}^{(s)}_b & b \leq (r-1), \\
	0 & \text{otherwise.}
	\end{cases}
	\end{equation}
	
	Let $\mathcal{\tilde{M}}$ denote the vector space of all possible $\tilde{M}(x)$'s. The dimension of $\mathcal{\tilde{M}}$ equals $r\nu=9$. This follows from CRT, as the vector space $\mathcal{\tilde{M}}^{(\nu)}$ has dimension $r\nu$. Let $\mathcal{T}$ indicate the collection of indices $t\in[0,n-1]$, for which there exists an  $\tilde{M}(x)\in\mathcal{\tilde{M}}$ with  $\tilde{M}_t\neq 0$ (i.e., the set of $j$'s for which first case in \eqref{eq:crt_map} is true). For our example, $\mathcal{T}=\{0,1,2,5,6,7,10,11,12\}$. As dimension of $\mathcal{\tilde{M}}$ equals the quantity $|\mathcal{T}|$, $\mathcal{\tilde{M}}$ is nothing but the vector space spanned by the set of monomials $\{x^t:t\in \mathcal{T}\}$.
	
	Now we shall see how to construct the required code $\mathcal{C}_{\text{TB}}$ with $(r=3,\delta=3)$ locality. Consider the subspace $\mathcal{M}$ of $\mathcal{\tilde{M}}$, with dimension $k=6$, obtained as follows. Let $\mathcal{M}$ be the subspace containing all $\tilde{M}(x)\in\mathcal{\tilde{M}}$ for which $\tilde{M}_j=0$ for the $(r\nu-k)=3$ largest indices in $\mathcal{T}$. i.e., $\mathcal{M}$ is the  vector space spanned by the set of monomials $\{x^t:t\in \{0,1,2,5,6,7\}\}$. Each codeword in $\mathcal{C}_{\text{TB}}$ will be evaluations of an $M(x)\in\mathcal{M}$ over the set $\mathcal{A}$. From Remark \ref{remark:eval_match}, when restricted to the points in $\mathcal{A}^{(i)}$, evaluations of $M(x)$ can be seen as evaluations of a lower degree polynomial with degree at most $(r-1)$. This essentially implies locality. As for the minimum-distance, the largest degree possible for $M(x)$, is $7$. Hence the number of roots possible are at most $7$, at $n=15$ evaluation points. Thus $d_{\text{min}}(\mathcal{C}_{\text{TB}})\geq 15-7=8$. This matches the upper bound in \eqref{eq:kam_bound}. 
	
	\begin{note}\label{remark:MDS_remark}
		It is known that each local code in the TB code is an MDS code of length $n_\ell$ and dimension $r$. In other words,  $\{M(x) \mod f^{(i)}(x): M(x)\in \mathcal{M}\}$	is in fact the set of all polynomials having degree at most $(r-1)$, $\forall i\in[0,\nu-1]$.
     \end{note}
\end{example}	

\subsection{Pairwise Coupling Transform (PCT) to Construct MSR Codes}

There is a sequence of works \cite{LiTangTianEnablingAllNodeRepair16}, \cite{YeBargPCTCode}, \cite{SasVajhaKum16}, \cite{TianLiTangISIT17} which share a certain Pairwise Coupling Transform (PCT) idea that can be used to obtain high-rate MSR codes from scalar MDS codes. We summarize the scheme as follows.

Let the MSR code parameters be $((n=st,k=s(t-1),d=(n-1)),(\alpha = s^t, \beta = s^{t-1}), B=k\alpha)$, where $s\geq 2$, $t \geq 2$. The $n$ nodes are indexed using $(x,y)$, where $0\leq x\leq s-1$, $1\leq y\leq t$. Each scalar symbol $A$ in an MSR codeword is indexed by a triplet denoted by: $(x,y,\underline{z})$, where $x \in \mathbb{Z}_s, y \in [t], \underline{z} \in \mathbb{Z}_s^t$. Here $\mathbb{Z}_s$ denotes the integers modulo $s$. The $(x,y)$ pair determines the node, while $\underline{z}$ determines the position of symbol within a node. Let $z_i$, for $1\leq i\leq t$, denote the $i^\text{th}$ element of $\underline{z}$ and $\underline{z}(x,y) \triangleq (z_1,z_2,\dots,z_{y-1},x,z_{y+1},\dots,z_t).$ In order to obtain the MSR code symbols $\{A(x,y,\underline{z})\}$, we initially populate every $(x,y,\underline{z})$ coordinate with a code-symbol $B(x,y,\underline{z})$. Here, for every fixed $\underline{z}$, $\{B(x,y,\underline{z})\}_{x\in[0,s-1],y\in[t]}$ corresponds to an independent layer of $[n,k]$ MDS code.
The coupled symbols, $A(x,y,\underline{z})$ can be written in terms of a $2\times 2$ coupling matrix, $C$ and uncoupled symbols, $B(x,y,\underline{z})$ as:

\begin{equation*}\label{eq:coupling}		
\begin{bmatrix}
A(x,y,\underline{z}) \\
A(z_y,y,\underline{z}(x,y))
\end{bmatrix}	
=
C
\begin{bmatrix}
B(x,y,\underline{z}) \\
B(z_y,y,\underline{z}(x,y))
\end{bmatrix}		
,~\forall x \neq z_y
\end{equation*}
\begin{equation*}
A(x,y,\underline{z}) = B(x,y,\underline{z}) ,~\forall x = z_y.
\end{equation*}
Here $C$ is chosen in such a way that any two out of the four (two coupled + two uncoupled) symbols will be sufficient to obtain the other two symbols. Let $\mathcal{P}\subseteq\{(x,y):\ x\in[0,s-1],y\in[t]\}$. Define for $1\leq i\leq t$, $\mathcal{P}_{i}=\{x:(x,i)\in\mathcal{P}\}$. We derive the following lemma.

\begin{lem}\label{lem:coupling_lemma}
	If $A(x,y,\underline{z})=0$ $\forall(x,y)\in \mathcal{P}$ and $\forall \underline{z} \in \mathbb{Z}_s^t$, then $B(x',y',\underline{z}')=0$, $\forall x',\underline{z}':x'\in \mathcal{P}_{y'}, z'_{y'}\in \mathcal{P}_{y'}$.
\end{lem}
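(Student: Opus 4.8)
The plan is to show that the coupling operation, when restricted to the relevant coordinates, is an invertible linear map, so that the hypothesis $A(x,y,\underline{z})=0$ on all of $\mathcal P$ forces the corresponding $B$-symbols to vanish. The key observation is that the coupling transform does not mix symbols across different values of the node-direction $y$: each $2\times 2$ coupling relation links $A(x,y,\underline z)$ with $A(z_y,y,\underline z(x,y))$, both having the same second coordinate $y$. So I would fix $y'$ once and for all and argue separately within that ``column direction''.

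First I would fix $y'\in[t]$ and a pair $(x',\underline z')$ with $x'\in\mathcal P_{y'}$ and $z'_{y'}\in\mathcal P_{y'}$. Consider the two nodes $(x',y')$ and $(z'_{y'},y')$: by the hypothesis on $\mathcal P_{y'}$, both lie in $\mathcal P$, so by assumption $A$ vanishes at every $\underline z$-coordinate of each of these two nodes; in particular $A(x',y',\underline z')=0$ and $A(z'_{y'},y',\underline z'(x',y'))=0$. Now I would invoke the coupling relation at $(x=x', y=y', \underline z=\underline z')$: if $x'\neq z'_{y'}$, the relation reads
\[
\begin{bmatrix} A(x',y',\underline z') \\ A(z'_{y'},y',\underline z'(x',y')) \end{bmatrix}
= C \begin{bmatrix} B(x',y',\underline z') \\ B(z'_{y'},y',\underline z'(x',y')) \end{bmatrix},
\]
and since $C$ is invertible (indeed $C$ was chosen precisely so that any two of the four symbols determine the other two, which forces $C$ to be nonsingular), the vanishing of the left side forces $B(x',y',\underline z')=0$. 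If instead $x'=z'_{y'}$, then the degenerate case $A(x,y,\underline z)=B(x,y,\underline z)$ applies directly and again $B(x',y',\underline z')=0$. Either way we conclude $B(x',y',\underline z')=0$, which is exactly the claim.

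The main thing to be careful about is the bookkeeping of which $A$-symbols the hypothesis actually controls: the hypothesis is that $A$ vanishes on \emph{all} $\underline z$ for $(x,y)\in\mathcal P$, and the conclusion is stated only for those $(x',\underline z')$ with $z'_{y'}$ \emph{also} in $\mathcal P_{y'}$ — this is precisely the condition needed to guarantee that the \emph{second} symbol appearing in the coupling pair, namely $A(z'_{y'},y',\underline z'(x',y'))$, is also known to be zero. Without $z'_{y'}\in\mathcal P_{y'}$ the second row of the right-hand vector would be uncontrolled and invertibility of $C$ would give nothing. So the only ``obstacle'' is to state this matching of indices cleanly; the algebra itself is just a $2\times2$ inversion. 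One minor subtlety worth a sentence: one should note that $\underline z'(x',y')$ still has $x'$-th... rather, its $y'$-th coordinate equal to $x'$, so that the pair $\{\underline z', \underline z'(x',y')\}$ is symmetric under swapping the roles of $x'$ and $z'_{y'}$ in the $y'$-th slot, confirming that the coupling relation indexed by $(x',y',\underline z')$ and the one indexed by $(z'_{y'},y',\underline z'(x',y'))$ are the same relation and no inconsistency arises.
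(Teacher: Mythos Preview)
Your proposal is correct and follows essentially the same approach as the paper: split into the two cases $x'=z'_{y'}$ (where $A=B$ directly) and $x'\neq z'_{y'}$ (where both entries of the coupled $A$-vector vanish by the hypothesis on $\mathcal P$, and invertibility of $C$ forces the $B$-vector to vanish). The paper's proof is just a terse version of exactly this argument; your additional remarks on why the condition $z'_{y'}\in\mathcal P_{y'}$ is needed and on the symmetry of the coupling pair are accurate but not required.
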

\begin{proof}
	If $z'_{y'}=x'$ and $x'\in\mathcal{P}_{y'}$, clearly, $B(x',y',\underline{z}')=A(x',y',\underline{z}')=0$. If $z'_{y'}\neq x'$, we have the coupled symbols $A(x',y',\underline{z}')$ and $A(z'_{y'},y',\underline{z}'(x',y'))$. As $A(x',y',\underline{z}')=A(z'_{y'},y',\underline{z}'(x',y'))=0$ (follows from $\{x',z'_{y'}\}\in\mathcal{P}_{y'}$ assumption), $B(x',y',\underline{z}')=B(z'_{y'},y',\underline{z}'(x',y'))=0$.
\end{proof}
\begin{cor}\label{cor:minimum_distance_layer0}
	If $A(x,y,\underline{z})=0$ $\forall(x,y)\in \mathcal{P}$ and $\forall \underline{z} \in \mathbb{Z}_s^t$, there exists a $\underline{z}'\in \mathbb{Z}_s^t$ such that  $B(x,y,\underline{z})=0$ $\forall(x,y)\in \mathcal{P}$. 
\end{cor}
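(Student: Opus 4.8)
The plan is to obtain this as an almost immediate consequence of Lemma~\ref{lem:coupling_lemma}. That lemma, applied under the hypothesis that all coupled symbols $A(x,y,\underline z)$ with $(x,y)\in\mathcal P$ vanish for every $\underline z$, guarantees that the uncoupled symbol $B(x',y',\underline z')$ vanishes whenever \emph{both} $x'\in\mathcal P_{y'}$ and $z'_{y'}\in\mathcal P_{y'}$. Since the first membership, $x'\in\mathcal P_{y'}$, holds automatically for every $(x',y')\in\mathcal P$ by the very definition of $\mathcal P_{y'}$, the corollary reduces to exhibiting a single vector $\underline z'\in\mathbb Z_s^t$ satisfying the second condition, $z'_{y'}\in\mathcal P_{y'}$, simultaneously for all coordinates $y'$ that can matter.

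The key observation is that the requirement ``$z'_{y'}\in\mathcal P_{y'}$'' involves only the $y'$-th coordinate of $\underline z'$, so the $t$ requirements (one per coordinate) are independent and can be met separately. I would therefore build $\underline z'$ coordinate by coordinate: for each $y\in[t]$, if $\mathcal P_y\neq\emptyset$ set $z'_y$ to be any element of $\mathcal P_y$, and if $\mathcal P_y=\emptyset$ set $z'_y$ arbitrarily (say $0$). Then for any $(x,y)\in\mathcal P$ we have $x\in\mathcal P_y$ and, since $(x,y)\in\mathcal P$ forces $\mathcal P_y\neq\emptyset$, also $z'_y\in\mathcal P_y$ by construction; Lemma~\ref{lem:coupling_lemma} then yields $B(x,y,\underline z')=0$, which is exactly the claimed conclusion.

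There is no real obstacle in this argument; the only point needing a word of care is the coordinates $y$ with $\mathcal P_y=\emptyset$, where an arbitrary choice of $z'_y$ is harmless precisely because no pair in $\mathcal P$ has second coordinate $y$, so that coordinate imposes no constraint on $\underline z'$. I would also note the downstream significance: the corollary says that the vanishing pattern $\mathcal P$ (at the node level) of a codeword of the PCT-transformed MSR code forces the \emph{same} vanishing pattern on at least one layer of the underlying $[n,k]$ MDS code, so that minimum-distance and, more generally, rank arguments about the coupled code can be transferred to a single MDS layer.
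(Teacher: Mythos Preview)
Your argument is correct and is exactly the intended derivation: the paper states this corollary immediately after Lemma~\ref{lem:coupling_lemma} without proof, and your coordinate-by-coordinate choice of $\underline{z}'$ (picking $z'_y\in\mathcal{P}_y$ whenever $\mathcal{P}_y\neq\emptyset$) is the natural way to instantiate the lemma's hypothesis $z'_{y'}\in\mathcal{P}_{y'}$ simultaneously for all relevant $y'$.
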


\section{Codes with MBR Locality }\label{sec:MBR_local_regen}
In this section, we present a family of codes with MBR $(r,\delta)$ all-symbol  locality, which is optimal with respect to \eqref{eq:min_distance_bound}. In contrast to the existing code constructions, these codes require a low field-size of $O(n)$. The construction is based on Product-Matrix MBR codes \cite{RasShaKum_pm} and Tamo-Barg codes \cite{TamBar}. 

{\it Parameters}: Let $\mathcal{C}$ be an $(n,K,d_{\text{min}},\alpha)$ vector code, with $((n_\ell,r,d),(\alpha,\beta),K_\ell)$ local MBR codes. We consider the disjoint locality case and hence have $n_\ell|n$,  $\nu\triangleq\frac{n}{n_\ell}$. Let $K$ be such that $P(P^{(\text{inv})}(K))=K$ (i.e., rate optimal \cite{KamPraLalKum}), where $P(.)$ is as defined in \eqref{eq:leading_sum}.

  Let $\hat{\mathbf{M}}^{(i)}$ be the $d\times d$ message matrix (under the PM MBR framework) corresponding to local code-$i$, for $0\leq i\leq \nu-1$. We consider MBR codes as polynomial evaluation codes, as seen in Example \ref{eg:MBR_poly_eval}. Our aim here is to introduce dependencies across the matrices $\{\hat{\mathbf{M}}^{(i)}\}$ so as to obtain the desired $K$ (with $K_\ell\leq K\leq  \nu K_\ell$) and the optimal minimum-distance. Assume $n|(q-1)$ and let $\mathcal{A}^{(i)}\triangleq\gamma^i\{1,\gamma^\nu,(\gamma^\nu)^2,\ldots,(\gamma^\nu)^{n_\ell-1}\}$ denote the set of evaluation points for the $i^{\text{th}}$ MBR local code, where $\gamma$ is a primitive $n^{\text{th}}$ root of unity. The message matrix $\hat{\mathbf{M}}^{(i)}$ is given by:
  
  \begin{align}\nonumber
  \hat{\mathbf{M}}^{(i)} = 
  \begin{bmatrix}
  \hat{\mathbf{S}}^{(i)} & (\hat{\mathbf{T}}^{(i)})^T\\
  \hat{\mathbf{T}}^{(i)} & 0 \\
  \end{bmatrix}\nonumber
  \end{align}
  where,

  \begin{align}\nonumber
  \hat{\mathbf{S}}^{(i)} = 
  \begin{bmatrix}
  \hat{m}^{(i)}_{0,0} & \hat{m}^{(i)}_{0,1} & \dots & \hat{m}^{(i)}_{0,r-1} \\         
  \hat{m}^{(i)}_{0,1} & \hat{m}^{(i)}_{1,1} & \dots & \hat{m}^{(i)}_{1,r-1} \\
  \vdots & \vdots & \ddots & \vdots\\
  \hat{m}^{(i)}_{0,r-1} & \hat{m}^{(i)}_{1,r-1} & \dots & \hat{m}^{(i)}_{r-1,r-1}\\
  \end{bmatrix}, \nonumber
  \end{align}

  \begin{align}\nonumber
  \hat{\mathbf{T}}^{(i)} = 
  \begin{bmatrix}
  \hat{m}^{(i)}_{0,r} & \hat{m}^{(i)}_{1,r} & \dots & \hat{m}^{(i)}_{r-1,r} \\         
  \hat{m}^{(i)}_{0,r+1} & \hat{m}^{(i)}_{1,r+1} & \dots & \hat{m}^{(i)}_{r-1,r+1} \\
  \vdots & \vdots & \ddots & \vdots\\
  \hat{m}^{(i)}_{0,d-1} & \hat{m}^{(i)}_{1,d-1} & \dots & \hat{m}^{(i)}_{r-1,d-1}\\
  \end{bmatrix}. \nonumber
  \end{align}
  
  For $0\leq i\leq \nu-1$, the $i^\text{th}$ MBR local code is obtained by evaluating $[\hat{m}^{(i)}_0(x)\ \hat{m}^{(i)}_1(x)\ \dots \ \hat{m}^{(i)}_{d-1}(x)]$ at the $n_\ell$ evaluation points given by $\mathcal{A}^{(i)}\triangleq\gamma^i\{1,\gamma^\nu,(\gamma^\nu)^2,\ldots,(\gamma^\nu)^{n_\ell-1}\}$, where $\hat{m}_j^{(i)}(x) \triangleq \sum_{t=0}^{d-1}\hat{m}^{(i)}_{j,t}x^t$. Let the $d$ columns of any message matrix $\hat{\mathbf{M}}^{(i)}$ be indexed by $j$, where $0\leq j\leq d-1$. Since $M$ is symmetric, we replace the notation $\hat{m}^{(i)}_{x,y}$ with $\hat{m}^{(i)}_{y,x}$ whenever $x>y$. Also,  $\hat{m}^{(i)}_{x,y}\triangleq 0$, when $r\leq x,y\leq d-1$.

  Fix a column $j$, for all the $\nu$ message matrices. Thus we have $\nu$ polynomials; $\hat{m}^{(0)}_j(x), \hat{m}^{(1)}_j(x),$
  $ \ldots, \hat{m}^{(\nu-1)}_j(x)$.
  We shall perform polynomial lifting to arrive at the polynomial $\hat{M}_j(x)$, which has the property (similar to that of $\tilde{M}(x)$ stated in \eqref{eq:poly_lifting}): $\hat{M}_j(x) \mod f^{(i)}(x) =  \hat{m}_j^{(i)}(x)$ for all $0\leq i\leq \nu-1$. Note that the vector space $\hat{\mathcal{M}}_j$ of all possible $\hat{M}_j(x)$'s has its dimension as follows:
  \begin{equation*}\label{eq:lifted_poly_vector_space_dimension}
  \text{dim}(\hat{\mathcal{M}}_j)  =
  \begin{cases} 
  d\nu & 0\leq j \leq (r-1), \\
  r\nu & r\leq j \leq (d-1).
  \end{cases}
  \end{equation*}
  This is because we have not assumed any dependencies across the column $j$ of the $\nu$ message matrices, to start with. Let $\hat{M}_j(x)\triangleq\sum_{t=0}^{n-1}\hat{M}_{j,t}x^t$. For $0\leq t\leq (n-1)$, $0\leq a\leq \nu-1$ and $0\leq b\leq n_\ell-1$,  let $t=an_\ell+b$. For $0\leq j\leq (r-1)$, we have:
   \begin{equation}\label{eq:crt_map1_MBR}
   \hat{M}_{j,t}  =
   \begin{cases} 
   \sum_{s=0}^{\nu-1}e^{(s)}_{a}\hat{m}^{(s)}_{j,b} & b \leq (d-1), \\
   0 & \text{otherwise.}
   \end{cases}
   \end{equation}
   Similarly, for $r\leq j\leq (d-1)$, we have:
   \begin{equation}\label{eq:crt_map2_MBR}
   \hat{M}_{j,t}  =
   \begin{cases} 
   \sum_{s=0}^{\nu-1}e^{(s)}_{a}\hat{m}^{(s)}_{j,b} & b \leq (r-1), \\
   0 & \text{otherwise.}
   \end{cases}
   \end{equation}
   
   Let $\mathcal{T}_j$ indicate the collection of indices $t\in[0,n-1]$, for which there exists an  $\hat{M}_j(x)\in\mathcal{\hat{M}}_j$ with  $\hat{M}_{j,t}\neq 0$ (i.e., the collection of $t$'s for which first case in \eqref{eq:crt_map1_MBR} or \eqref{eq:crt_map2_MBR} is true). Therefore,   
    \begin{equation}\label{eq:nonzero_degrees}
  \mathcal{T}_j  =
   \begin{cases} 
   \cup_{z=0}^{\nu-1}\{zn_\ell,zn_\ell+1,\ldots,zn_\ell+d-1\} & 0\leq j\leq  (r-1), \\
   \cup_{z=0}^{\nu-1}\{zn_\ell,zn_\ell+1,\ldots,zn_\ell+r-1\} & r\leq j\leq  (d-1)
   \end{cases}.
   \end{equation}
   Similar to the case in Example \ref{eg:TB_code_example}, $\mathcal{\hat{M}}_j$ is precisely the space spanned by the set of polynomials $\{x^t:t\in \mathcal{T}_j\}$.
   
   {\it Construction for Code with MBR Locality}: Let $K=aK_\ell +b$, where $0\leq a\leq \nu-1$, $1\leq b\leq K_\ell$. If $a=0$, $b=K_\ell$. For the last $d-P^{(\text{inv})}(b)$ columns, i.e., $P^{(\text{inv})}(b)\leq j\leq d-1$, consider the subspace $\mathcal{M}_j$ of $\hat{\mathcal{M}}_j$, spanned by monomials of degree at most $an_\ell+P^{(\text{inv})}(b)-1$. For columns $0\leq j\leq P^{(\text{inv})}(b)-1$, take $\mathcal{M}_j$ to be the space spanned by monomials of degree at most $an_\ell+d-1$. This is essentially equivalent to introducing dependencies for each column $j$ across all the $\nu$ message matrices (as we have seen in Example \ref{eg:TB_code_example}). Let $\{\mathbf{M}^{(i)}\}_{i=0}^{\nu-1}$ be the collection of message matrices obtained after introducing dependencies. $\mathcal{C}$ is obtained by individually evaluating (as in Example \ref{eg:MBR_poly_eval}) each message matrix ${\mathbf{M}}^{(i)}$ at the respective evaluation points in $\mathcal{A}_i$. Note that by Remark \ref{remark:eval_match}, each codeword of the length-$n$ scalar code, obtained by restricting $\mathcal{C}$ to column $j$, is nothing but the $n$ evaluations of a polynomial in $\mathcal{M}_j$.
   
   \begin{claim}
   	$\mathcal{C}$ is an $(n,K,d_{\text{min}},\alpha)$ vector code with $((n_\ell,r,d),(\alpha,\beta),K_\ell)$-MBR locality, where $d_\text{min}$ meets the upper bound \eqref{eq:min_distance_bound}.
   \end{claim}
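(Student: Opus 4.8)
The plan is to verify the claim in three stages: the MBR $(r,\delta)$-locality with the stated local parameters, the scalar dimension $K$, and the exact value of $d_{\text{min}}$. Throughout, write $K=aK_\ell+b$ with $0\le a\le\nu-1$, $1\le b\le K_\ell$ (and $b=K_\ell$ if $a=0$), and set $p\triangleq P^{(\text{inv})}(b)$, so that $1\le p\le r$. A short computation with the MBR leading-sum function (using $P(zn_\ell+i)=zK_\ell+P(i)$ and $P(i)=K_\ell$ for $r\le i\le n_\ell$) shows $P^{(\text{inv})}(K)=an_\ell+p$, so the target equality is $d_{\text{min}}=n-(an_\ell+p)+1$.

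For locality, first note that $\{\mathcal{A}^{(i)}\}_{i=0}^{\nu-1}$ partitions the $n$ thick columns into $\nu$ disjoint blocks of size $n_\ell=r+\delta-1$, so it suffices to show each $\mathcal{C}|_{\mathcal{A}^{(i)}}$ is the full PM MBR code with parameters $((n_\ell,r,d),(\alpha,\beta),K_\ell)$ (which exists since $r\le d\le n_\ell-1$, and has minimum distance $n_\ell-r+1=\delta$, fulfilling Definition \ref{def:r_delta_loc}). By Remark \ref{remark:eval_match}, the codeword of $\mathcal{C}$ restricted to $\mathcal{A}^{(i)}$ is $(\boldsymbol{\psi}_a\mathbf{M}^{(i)}:a\in\mathcal{A}^{(i)})$, so this restriction is always a PM MBR code on the symmetric matrix $\mathbf{M}^{(i)}$; the point to check is that $\mathbf{M}^{(i)}$ ranges over \emph{all} symmetric PM MBR message matrices as the codeword of $\mathcal{C}$ varies. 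One inclusion is forced by the degree caps in the definition of the $\mathcal{M}_j$. For the converse, given a target symmetric PM MBR matrix $\mathbf{N}$ I would take, for every column $j$, the polynomial $M_j(x)$ to be the $j$-th column polynomial of $\mathbf{N}$ itself (degree $\le d-1$ if $j\le r-1$, degree $\le r-1$ if $j\ge r$); this polynomial lies entirely in the $z=0$ block, hence lies in $\mathcal{M}_j$ in every case, it equals $M_j\bmod f^{(i)}$, and the symmetry of $\mathbf{N}$ makes the coefficient-symmetry constraints (needed so that the reconstructed matrices are symmetric PM MBR matrices) hold automatically. Thus $\mathcal{C}|_{\mathcal{A}^{(i)}}$ is the full PM MBR code, and $\mathcal{C}$ has MBR $(r,\delta)$ all-symbol (indeed disjoint) locality.

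Next, the encoding map $(\mathbf{M}^{(i)})_{i}\mapsto$ codeword is injective: a codeword vanishes iff each $M_j$ vanishes on all of $\mathcal{A}$, and since $M_j\bmod f^{(i)}$ has degree $<n_\ell$ and vanishes at the $n_\ell$ distinct points of $\mathcal{A}^{(i)}$, every $\mathbf{M}^{(i)}=0$. Hence $K=\dim\mathcal{C}_s$ equals the number of free scalar parameters in a valid tuple, which I would count by grouping the coefficient positions $x^{zn_\ell+b'}$ of the $M_j$'s by block index $z$ and quotienting by the symmetry identifications: blocks $z=0,\dots,a-1$ each contribute a full PM MBR matrix's worth of parameters ($K_\ell$ each), while the top block $z=a$ (where the degree caps bite) contributes $\binom{p+1}{2}+p(d-p)=pd-\binom{p}{2}=P(p)$. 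Therefore $\dim\mathcal{C}_s=aK_\ell+P(p)$, and since the construction assumes rate-optimality $P(P^{(\text{inv})}(K))=K$ and $P^{(\text{inv})}(K)=an_\ell+p$ gives $P(P^{(\text{inv})}(K))=aK_\ell+P(p)$, we conclude $P(p)=b$ and $\dim\mathcal{C}_s=aK_\ell+b=K$.

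Finally, for the minimum distance, the bound $d_{\text{min}}\le n-P^{(\text{inv})}(K)+1=n-(an_\ell+p)+1$ is exactly \eqref{eq:min_distance_bound}. For the matching lower bound I would show that a codeword with at least $an_\ell+p$ zero thick columns must be zero. A thick column at $a$ is zero iff $M_j(a)=0$ for every $j$. Inspection of the subspaces $\mathcal{M}_j$ shows that every column $j\ge p$ has $\deg M_j\le an_\ell+p-1$ (the top-block monomials allowed there have degree at most $an_\ell+p-1$, and since $d\le n_\ell-1$ this dominates the highest degree $(a-1)n_\ell+d-1$ available from lower blocks); as $M_j$ then has at least $an_\ell+p$ roots, $M_j\equiv0$ for all $j\ge p$. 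Plugging this into the coefficient-symmetry relations --- the coefficient of $x^{zn_\ell+j'}$ in $M_j$ equals the coefficient of $x^{zn_\ell+j}$ in $M_{j'}$, which follows from symmetry of the $\mathbf{M}^{(i)}$ together with the lifting identity $M_j=\sum_i (M_j\bmod f^{(i)})\,e^{(i)}$ --- forces, for each $j<p$, every coefficient at degrees $zn_\ell+j'$ with $j'\ge p$ to vanish, so $M_j$ is supported only on degrees $zn_\ell+b'$ with $b'\le p-1$, whence again $\deg M_j\le an_\ell+p-1$ and $M_j\equiv0$. So the codeword is zero, every nonzero codeword has at most $an_\ell+p-1$ zero thick columns, and $d_{\text{min}}\ge n-(an_\ell+p)+1$. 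I expect the bookkeeping in the locality step (confirming the imposed cross-block dependencies still leave each local code equal to the full PM MBR code rather than a proper subcode) and the parameter count to be the most delicate parts; the distance argument itself is short once the degree caps on the $\mathcal{M}_j$ and the coefficient-symmetry relations are in hand.
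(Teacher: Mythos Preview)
Your proof is correct and follows essentially the same three-part structure as the paper: MBR locality via surjectivity of the local projection, the dimension count, and the minimum-distance argument that first kills the columns $j\ge P^{(\text{inv})}(b)$ by a degree bound and then kills the remaining columns using the symmetry of the message matrices. Your explicit $z=0$-block argument for locality and your direct block-by-block parameter count are clean reformulations of the paper's appeal to Remark~\ref{remark:MDS_remark} and its ``subtract $(\nu-a-1)K_\ell+(K_\ell-b)$ dependencies from $\nu K_\ell$'' computation, but the underlying ideas are identical.
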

   
   \begin{proof}{(outline)}
   	We need to prove three things here; (i) the local code is an MBR code with parameters 
   	$((n_\ell,r,d),(\alpha,\beta),K_\ell)$ (ii) $\mathcal{C}$ has scalar dimension $K$ and (iii) $\mathcal{C}$ is $d_\text{min}$-optimal. 
   	
   	{\it MBR locality}: From Remark \ref{remark:MDS_remark}, one can infer that the space of all ${\mathbf{M}}^{(i)}$'s is same as the space of all $\hat{\mathbf{M}}^{(i)}$, which is a subspace of the space of $d\times d$ symmetric matrices and has dimension $K_\ell$. Hence each local code will be an $((n_\ell,r,d),(\alpha,\beta),K_\ell)$-MBR code.
   	
   	{\it Scalar dimension $K$}:  All the columns $j\in[0,d-1]$ give rise to lifted polynomials of degree at most $(an_\ell+d-1)$. In other words $ \hat{M}_{j,t}$'s must be zeros for all $j\in[0,d-1]$ and $t>an_\ell+d-1$ in \eqref{eq:crt_map1_MBR} and \eqref{eq:crt_map2_MBR}. It can be verified that, as $\hat{m}^{(s)}_{j,i}=\hat{m}^{(s)}_{i,j}$ (symmetry of message matrices), this results in a total of $(\nu-a-1)K_\ell$ dependencies. The lifted polynomials arising from columns $j\in [P^{(\text{inv})}(b),d-1]$ are further constrained to a degree of at most  $(an_\ell+P^{(\text{inv})}(b)-1)$.
   	
   	From \eqref{eq:mbr_rank_profile}, we have:
   	\begin{equation}
   	\sum_{z=0}^{r-P^{(\text{inv})}(b)-1}(d-r+1+z)+\sum_{z=0}^{P^{(\text{inv})}(b)-1}(d-z)=K_\ell.
   	\end{equation}
   	Note that as $K$ is chosen to be such that $P(P^{(\text{inv})}(K))=K$, this also means that $P(P^{(\text{inv})}(b))=b$.
   	Hence from \eqref{eq:mbr_rank_profile} and \eqref{eq:leading_sum}, we can infer that $\sum_{z=0}^{P^{(\text{inv})}(b)-1}(d-z)=b$. Because of the symmetric nature of message matrices, the number of additional dependencies (they are already constrained to a maximum degree of $an_\ell+d-1$) that need to be introduced to constraint the last $d-P^{(\text{inv})}(b)$ columns to a degree of at most $an_\ell+P^{(\text{inv})}(b)-1$ can be verified to be precisely $\sum_{z=0}^{r-P^{(\text{inv})}(b)-1}(d-r+1+z)=K_\ell-b$. Thus dimension of $\mathcal{C}$ is $\nu K_\ell-(\nu-a-1)K_\ell-\sum_{z=0}^{r-P^{(\text{inv})}(b)-1}(d-r+1+z)=\nu K_\ell-(\nu-a-1)K_\ell-(K_\ell-b)=aK_\ell+b=K$.

   	{\it Minimum-distance optimality}: From \eqref{eq:min_distance_bound}, 	$d_{\text{min}} \leq n -P^{(\text{inv})}(K) + 1= n-(an_\ell+P^{(\text{inv})}(b))+1$. For the scalar code (polynomial evaluation code) obtained by restricting $\mathcal{C}$ to any of the columns $j\in [P^{(\text{inv})}(b),d-1]$, the largest degree of the underlying polynomial is restricted to $(an_\ell+P^{(\text{inv})}(b)-1)$ by design. Hence for the columns  $j\in [P^{(\text{inv})}(b),d-1]$, the scalar minimum-distance is at least $n-(an_\ell+P^{(\text{inv})}(b)-1)=n-(an_\ell+P^{(\text{inv})}(b))+1$. If for some choice of $\{\mathbf{M}^{(i)}\}_{i=0}^{\nu-1}$, all the columns in the range $j\in [P^{(\text{inv})}(b),d-1]$ yield all-zero codewords, this essentially means the message matrices $\{\mathbf{M}^{(i)}\}_{i=0}^{\nu-1}$, when restricted to these columns are all-zero matrices. As all the message matrices are symmetric, the last $d-P^{(\text{inv})}(b)$ rows will also be zeros for all the message matrices. Thus, for the columns in the range $[0,P^{(\text{inv})}(b)-1]$, lifted polynomials lie in the span of $\{x^t:t\in \mathcal{T}'_j\}$, where $\mathcal{T}'_j=\cup_{z=0}^{\nu-1}\{zn_\ell,zn_\ell+1,\ldots,zn_\ell+P^{(\text{inv})}(b)-1\}$. However by design, the degree is at most $(an_\ell+d-1)$, for these polynomials. Hence the maximum degree possible is $(an_\ell+P^{(\text{inv})}(b)-1)$. Thus, if the last $d-P^{(\text{inv})}(b)$ columns give rise to all-zero codewords, the first $P^{(\text{inv})}(b)$ columns will give scalar codewords with a minimum-distance of at least $n-(an_\ell+P^{(\text{inv})}(b))+1$. This proves the minimum-distance optimality of $\mathcal{C}$.
\end{proof} 		
Thus, we have the following theorem.    
    \begin{thm}
    	Linear field-size constructions exist for minimum-distance optimal, rate-optimal $(n,K,d_{\text{min}},\alpha)$ vector codes, with $((n_\ell,r,d),(\alpha,\beta),K_\ell)$ local MBR codes, where $1\leq r\leq d\leq (n_\ell-1)$ and $n_\ell|n$.
    \end{thm}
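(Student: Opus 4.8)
The plan is to take the code $\mathcal{C}$ built in Section~\ref{sec:MBR_local_regen} as the witness and argue that, for every admissible parameter choice, it has all the advertised properties --- correct shape with MBR local codes, scalar dimension exactly $K$, $d_{\text{min}}$-optimality, rate-optimality --- together with a linear field size. First I would fix an arbitrary tuple $(n,n_\ell,r,d)$ with $n_\ell\mid n$, $\nu\triangleq n/n_\ell$, $1\leq r\leq d\leq n_\ell-1$, and an arbitrary rate-optimal target $K$ with $K_\ell\leq K\leq\nu K_\ell$, i.e.\ $K$ with $P(P^{(\text{inv})}(K))=K$ for the leading-sum $P(\cdot)$ of \eqref{eq:leading_sum} attached to the MBR rank profile \eqref{eq:mbr_rank_profile}; write $K=aK_\ell+b$ as in the construction. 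I would then check that the construction is defined on this whole range: $1\leq r\leq d\leq n_\ell-1$ is precisely the existence range of a Product-Matrix MBR code \cite{RasShaKum_pm} on $n_\ell$ nodes (so $(\alpha,\beta)=(d,1)$ and $K_\ell=rd-\binom{r}{2}$), $n_\ell\mid n$ supplies the disjoint partition into $\nu$ local codes, and $P(P^{(\text{inv})}(K))=K$ is exactly the identity invoked when the per-column degree caps are chosen.

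Next I would read the three structural facts off the Claim. (i) By Remark~\ref{remark:MDS_remark}, the span of the message matrices $\{\mathbf{M}^{(i)}\}$ that remain after introducing dependencies is the full $K_\ell$-dimensional space of PM MBR message matrices, so each punctured code $\mathcal{C}|_{S_i}$ is a genuine $((n_\ell,r,d),(\alpha,\beta),K_\ell)$ MBR code and the $S_i$ are pairwise disjoint; hence $\mathcal{C}$ has $(r,\delta)$ all-symbol MBR locality and, by \cite{KamPraLalKum}, obeys \eqref{eq:min_distance_bound}. (ii) A dependency count gives the dimension: $(\nu-a-1)K_\ell$ relations cap every lifted column polynomial at degree $an_\ell+d-1$, and a further $K_\ell-b$ relations --- here using symmetry of the $d\times d$ message matrices and the rate-optimal identity $\sum_{z=0}^{P^{(\text{inv})}(b)-1}(d-z)=b$ --- cap the last $d-P^{(\text{inv})}(b)$ columns at degree $an_\ell+P^{(\text{inv})}(b)-1$, leaving scalar dimension $\nu K_\ell-(\nu-a-1)K_\ell-(K_\ell-b)=aK_\ell+b=K$. (iii) For the distance, restricting $\mathcal{C}$ to a column $j\in[P^{(\text{inv})}(b),d-1]$ yields, by Remark~\ref{remark:eval_match}, the $n$ evaluations of a single polynomial of degree at most $an_\ell+P^{(\text{inv})}(b)-1=P^{(\text{inv})}(K)-1$, which is nonzero on at least $n-P^{(\text{inv})}(K)+1$ of the $n$ thick columns unless it is identically zero; and in the degenerate case where all of these columns vanish, symmetry zeroes the matching rows of every message matrix, which together with the block structure of \eqref{eq:nonzero_degrees} and the cap $an_\ell+d-1$ collapses every remaining column polynomial to the same bound $P^{(\text{inv})}(K)-1$. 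Since the all-zero codeword is excluded, some column then has scalar Hamming weight at least $n-P^{(\text{inv})}(K)+1$, so $d_{\text{min}}(\mathcal{C})\geq n-P^{(\text{inv})}(K)+1$, matching \eqref{eq:min_distance_bound}; as $K$ was chosen rate-optimal, $\mathcal{C}$ is simultaneously $d_{\text{min}}$-optimal and rate-optimal in the sense of \cite{KamPraLalKum}.

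It remains to record the field size, which is where this construction improves on the $O(n^2)$ of prior work. The construction imposes nothing on $q$ beyond $n\mid(q-1)$, which is needed only so that the multiplicative group of $\mathbb{F}$ contains a cyclic subgroup of order $n$, hence the partition of $\mathcal{A}$ into $\nu$ cosets of its order-$n_\ell$ subgroup that underlies the annihilating polynomials $f^{(i)}$, the idempotents $e^{(i)}$, and the polynomial lifting of Section~\ref{sec:tb_code_recap}; in fact any prime power $q>n$ with $n_\ell\mid(q-1)$ suffices, using $\nu$ arbitrary cosets of an order-$n_\ell$ subgroup in place of $\langle\gamma^\nu\rangle$. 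Either way $q$ may be taken linear in $n$. The genuinely delicate steps are the two internal to the Claim (whose proof above is only outlined): confirming that the symmetry of the message matrices makes the dependency count come out to exactly $K$, and confirming that in the distance argument the vanishing of the last $d-P^{(\text{inv})}(b)$ columns really propagates, through symmetry and the block structure \eqref{eq:nonzero_degrees}, to a uniform degree bound on all $d$ columns --- both turning on the rate-optimal identity $P(P^{(\text{inv})}(b))=b$. Granting the Claim, the theorem follows at once.
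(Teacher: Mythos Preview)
Your proposal is correct and follows essentially the same approach as the paper: you take the construction of Section~\ref{sec:MBR_local_regen} as the witness, invoke the three parts of the Claim (MBR locality via Remark~\ref{remark:MDS_remark}, dimension via the dependency count exploiting symmetry and $P(P^{(\text{inv})}(b))=b$, and $d_{\text{min}}$-optimality via the degree-bound-plus-symmetry-propagation argument), and then observe that only $n\mid(q-1)$ is needed for the field. Your remark that in fact $n_\ell\mid(q-1)$ with $q>n$ already suffices is a small sharpening beyond what the paper states explicitly, but the overall structure is the paper's own.
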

    
\begin{example}\label{eg:mbr_locality}
	\normalfont
	
	Let $n=12$, $n_\ell=6$, $r=3$, $d=4$, $K_\ell=9$, $K=13$, $\beta=1$.  Hence we have $\delta=n_\ell-r+1=4$, $P(s)=(4\ 7\ 9\ 9\ 9\ 9\ 13\ 16\ 18\ 18\ 18\ 18)$, indexed over $1\leq s\leq n=12$. Note that $P^{(\text{inv})}(K)=7$ and thus $d_{\min}\leq 6$. Let $\hat{\mathbf{M}}^{(i)}$, $0\leq i\leq \nu-1=1$, the MBR message matrix corresponding to the $i^\text{th}$ local MBR code, be as given below. The $(x,y)^{\text{th}}$ element of $\hat{\mathbf{M}}^{(i)}$ is denoted by 	$\hat{m}^{(i)}_{x,y}$. Note that as $\hat{\mathbf{M}}^{(i)}$ is a symmetric matrix,  $\hat{m}^{(i)}_{x,y}=\hat{m}^{(i)}_{y,x}$. 
	\begin{align}\nonumber
	\hat{\mathbf{M}}^{(i)} = 
	\begin{bmatrix}
	\hat{m}^{(i)}_{0,0} & \hat{m}^{(i)}_{0,1} & \hat{m}^{(i)}_{0,2} & \hat{m}^{(i)}_{0,3}\\         
	\hat{m}^{(i)}_{0,1} & \hat{m}^{(i)}_{1,1} & \hat{m}^{(i)}_{1,2} & \hat{m}^{(i)}_{1,3} \\
	\hat{m}^{(i)}_{0,2} & \hat{m}^{(i)}_{1,2} & \hat{m}^{(i)}_{2,2} & \hat{m}^{(i)}_{2,3}\\
	\hat{m}^{(i)}_{0,3} & \hat{m}^{(i)}_{1,3} & \hat{m}^{(i)}_{2,3} & 0
	\end{bmatrix}. \nonumber
	\end{align}

	For $0\leq i\leq \nu-1=1$, the $i^\text{th}$ MBR local code is obtained by evaluating the vector of polynomials $[\hat{m}^{(i)}_0(x)\ \hat{m}^{(i)}_1(x)\ \hat{m}^{(i)}_2(x)\ \hat{m}^{(i)}_3(x)]$ at the $n_\ell=6$ evaluation points given by $\mathcal{A}^{(i)}\triangleq\gamma^i\{1,\gamma^2,\gamma^4,\gamma^6,\gamma^{8},\gamma^{10}\}$. Here $\hat{m}^{(i)}_j(x)\triangleq \sum_{t=0}^{d-1}\hat{m}^{(i)}_{j,t}x^t$. In order to stress up on the symmetric nature of $\hat{\mathbf{M}}^{(i)}$, we relabel $\hat{m}^{(i)}_{x,y}$ as $\hat{m}^{(i)}_{y,x}$, whenever $x>y$. Thus we have:
	$\hat{m}^{(i)}_0(x)\triangleq\hat{m}^{(i)}_{0,0}+\hat{m}^{(i)}_{0,1}x+\hat{m}^{(i)}_{0,2}x^2+\hat{m}^{(i)}_{0,3}x^3,\ldots,\hat{m}^{(i)}_3(x)\triangleq\hat{m}^{(i)}_{0,3}+\hat{m}^{(i)}_{1,3}x+\hat{m}^{(i)}_{2,3}x^2$. 
	
	Consider the column, $j=(d-1)=3$, of the $\nu=2$ message matrices. We have the $\nu=2$ polynomials; $\hat{m}^{(0)}_3(x)$ and $\hat{m}^{(1)}_3(x)$. Take $\hat{m}^{(i)}_3(x)$ to be the polynomial $\tilde{m}^{(i)}(x)$ appearing in \eqref{eq:poly_lifting}, for $0\leq i\leq \nu-1=1$. We shall perform polynomial lifting to arrive at the polynomial $\hat{M}_3(x)$, which has the property (similar to that of $\tilde{M}(x)$ stated in \eqref{eq:poly_lifting}): $\hat{M}_3(x) \mod f^{(i)}(x) =  \hat{m}_3^{(i)}(x)$, for all $0\leq i\leq \nu-1=1$. Note that the vector space $\mathcal{\hat{M}}_3$ of all possible $\hat{M}_3(x)$'s, has a dimension of $r\nu=3*2=6$, as we don't assume any dependencies across the $j=3$ column of $\hat{\mathbf{M}}^{(0)}$ and $\hat{\mathbf{M}}^{(1)}$, to start with. Let $\hat{M}_3(x)\triangleq\sum_{t=0}^{n-1}\hat{M}_{3,t}x^t$. For $0\leq t\leq (n-1)$, $0\leq a\leq \nu-1$ and $0\leq b\leq n_\ell-1$,  let $t=an_\ell+b$. We have:
	\begin{equation}\label{eq:crt_map_MBR}
	\hat{M}_{3,t}  =
	\begin{cases} 
	\sum_{s=0}^{\nu-1}e^{(s)}_{a}\hat{m}^{(s)}_{3,b} & b \leq (r-1), \\
	0 & \text{otherwise.}
	\end{cases}
	\end{equation}
	Here \eqref{eq:crt_map_MBR} is just a restatement of \eqref{eq:crt_map}. Note that $\mathcal{\hat{M}}_3$ is the space spanned by the set of monomials $\{x^t:t\in \{0,1,2,6,7,8\}\}$. Let $\mathcal{M}_3$ be the subspace of  $\mathcal{\hat{M}}_3$ spanned by  $\{x^t:t\in \{0,1,2,6\}\}$, after removing the two largest degree terms (in Example \ref{eg:TB_code_example}, we obtained $\mathcal{M}$ from $\tilde{\mathcal{M}}$ in a similar manner). By CRT, this essentially means introducing two dependencies across the coefficients of polynomials  $\hat{m}_3^{(0)}(x)$ and $\hat{m}_3^{(1)}(x)$. These dependencies introduced for $t=8$ and $t=7$, are explicitly given by:
	
	\begin{equation}\label{eq:mbr_code_dependencies}
	\sum_{s=0}^{\nu-1}e^{(s)}_{1}\hat{m}^{(s)}_{3,b}=0,\ \text{for}\  b\in\{1,2\}
	\end{equation} 
	
	After relabeling $\hat{m}^{(s)}_{3,b}$ as $\hat{m}^{(s)}_{b,3}$, we have:
	\begin{equation*}\label{eq:mbr_dependency}
	e^{(0)}_{1}\hat{m}^{(0)}_{2,3}+e^{(1)}_{1}\hat{m}^{(1)}_{2,3}=0,\ 
	e^{(0)}_{1}\hat{m}^{(0)}_{1,3}+e^{(1)}_{1}\hat{m}^{(1)}_{1,3}=0.
	\end{equation*} 
	
	Now consider the column, $j=2$. Similar to the case of $j=3$, consider the $\nu=2$ polynomials $\hat{m}^{(0)}_2(x)$ and $\hat{m}^{(1)}_2(x)$. After performing the polynomial lifting, we arrive at the polynomial $\hat{M}_2(x)$, which belongs to the space spanned by $\{x^t:t\in \{0,1,2,3,6,7,8,9\}\}$. We then introduce three dependencies among the polynomials $\hat{m}^{(0)}_2(x)$ and $\hat{m}^{(1)}_2(x)$ by considering the subspace
	$\mathcal{{M}}_2$ of $\mathcal{\hat{M}}_2$ spanned by $\{x^t:t\in \{0,1,2,3,6\}\}$.
	
	We perform an identical operation for $j=1$ as well, with three dependencies. In Table \ref{tab:dependencies_list}, we summarize the three cases $j=3,2,1$. There are five unique dependencies introduced, and hence the dimension of the space of all possible $\{[\hat{m}^{(i)}_0(x),\hat{m}^{(i)}_1(x), \hat{m}^{(i)}_2(x), \hat{m}^{(i)}_3(x)]\}_{i=0}^{\nu-1=1}$ will be $\nu K_\ell-5=13=K$. Now, for $0\leq i\leq \nu-1=1$, $i^\text{th}$ local codewords are produced using $\hat{\mathbf{M}}^{(i)}$ along with the evaluation points $\mathcal{A}^{(i)}$, as in Example \ref{eg:MBR_poly_eval}. Using Remark \ref{remark:MDS_remark}, one can infer that even after introducing dependencies, the vector space of all possible $\hat{\mathbf{M}}^{(i)}$'s, for a fixed $i\in[0,1]$, will still have the dimension $K_\ell=9$. Hence the code $\mathcal{C}$ thus formed, is a code with MBR local regeneration having all the desired parameters. Figure \ref{fig:mbr_locality_eg} gives an illustration of this example code.  
	
	\begin{table}[ht]
		\centering
		\begin{tabular}{ |c|c|c|c|c|c| } 
			
			\hline
			Column&Number of&Dependencies\\ 
			& dependencies & \\	
			\hline \hline
			$3$ & $2$ & $e^{(0)}_{1}\hat{m}^{(0)}_{2,3}+e^{(1)}_{1}\hat{m}^{(1)}_{2,3}=0$,\\ 
			& & $e^{(0)}_{1}\hat{m}^{(0)}_{1,3}+e^{(1)}_{1}\hat{m}^{(1)}_{1,3}=0$\\
			\hline
			$2$ & $3$ & $e^{(0)}_{1}\hat{m}^{(0)}_{2,3}+e^{(1)}_{1}\hat{m}^{(1)}_{2,3}=0$,\\
			&  & $e^{(0)}_{1}\hat{m}^{(0)}_{2,2}+e^{(1)}_{1}\hat{m}^{(1)}_{2,2}=0$,\\
			&  & $e^{(0)}_{1}\hat{m}^{(0)}_{1,2}+e^{(1)}_{1}\hat{m}^{(1)}_{1,2}=0$ \\
			\hline
			$1$ & $3$ & $e^{(0)}_{1}\hat{m}^{(0)}_{1,3}+e^{(1)}_{1}\hat{m}^{(1)}_{1,3}=0$,\\
			&  & $e^{(0)}_{1}\hat{m}^{(0)}_{1,2}+e^{(1)}_{1}\hat{m}^{(1)}_{1,2}=0$,\\
			&  & $e^{(0)}_{1}\hat{m}^{(0)}_{1,1}+e^{(1)}_{1}\hat{m}^{(1)}_{1,1}=0$ \\
			\hline
		\end{tabular}
		\caption{A summary of dependencies introduced across the columns $3$, $2$ and $1$, of message matrices $\hat{\mathbf{M}}^{(0)}$ and $\hat{\mathbf{M}}^{(1)}$.}
		\label{tab:dependencies_list}
	\end{table}
\end{example}

As for the minimum-distance, the scalar code of length $12$, obtained by restricting $\mathcal{C}$ to any column $j\in\{3,2,1,0\}$ from each node, will be a TB code. Restricted to column $3$, using Remark \ref{remark:eval_match}, the TB codeword obtained will be $n$ evaluations of a polynomial lying in the span of $\{x^t:t\in \{0,1,2,6\}\}$. Similarly, columns $2$ and $1$ yield scalar codes, which are evaluations of polynomials lying in the span of $\{x^t:t\in \{0,1,2,3,6\}\}$. As the degree of these polynomials is at most $6$, minimum-distance restricted to these columns will be at least $(n-6)=6$. 

If all the scalar codewords obtained from columns $3$, $2$ and $1$ are zero-codewords, it essentially implies $\hat{m}^{(i)}_j(x)$'s are all zero-polynomials for $j\in\{3,2,1\}$ and $i\in\{0,1\}$. As $\hat{\mathbf{M}}^{(i)}$'s are all symmetric matrices, for these cases, $\hat{m}^{(i)}_{x,y}=0$ $\forall x,y,i: 1\leq x,y\leq 3,0\leq i\leq 1$. Hence, only $\hat{m}^{(i)}_{0,0}$'s can possibly be non-zero. Thus, restricted to column $0$, the scalar codeword will be $n$ evaluations of a polynomial lying in the span of $\{x^t:t\in \{0,6\}\}$. Hence even for column $0$, minimum-distance will be at least $6$. This essentially proves the minimum-distance optimality of the code.  

\begin{figure}[ht]
	\centering
	\captionsetup{justification=centering}
	\includegraphics[scale=0.45]{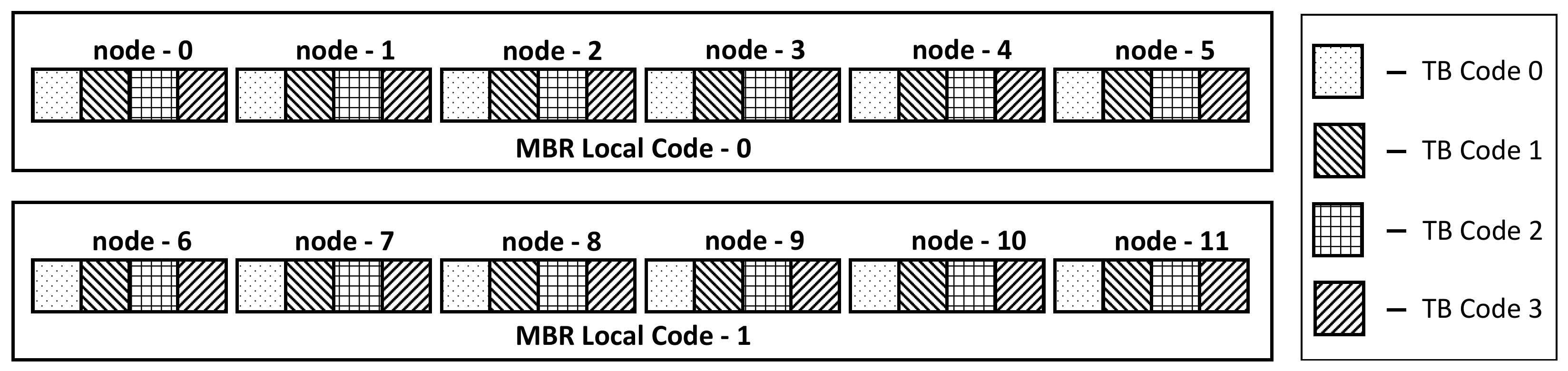}
	\caption{An illustration of the code in Example \ref{eg:mbr_locality}.}
	\label{fig:mbr_locality_eg}
\end{figure}

\section{Codes with MSR Locality}\label{sec:MSR_local_regen}

From Remark \ref{remark:MDS_remark}, we know that each local code in a TB code is an MDS code. In order to construct a code with MSR local regeneration, we initially stack $\alpha=(n_\ell-r)^{\frac{n_\ell}{n_\ell-r}}$ independent layers of codewords from an $[n,k,d_\text{TB}]$ TB code with $(r,\delta)$ all-symbol locality. We then perform the PCT independently, for each local code. This essentially results in a code $\mathcal{C}$ with MSR local regeneration. The local MSR code will have the parameters $((n_\ell,r,d),(\alpha,\beta),K_\ell)$, with $d=n_\ell-1$, $(n_\ell-r)\mid r$. Let $d_\text{TB}$ denote the (optimal) minimum-distance of the underlying TB code.

\begin{thm}
		$\mathcal{C}$ has optimal minimum-distance when $d_\text{TB}\leq 2\delta$.
\end{thm}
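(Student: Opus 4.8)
The plan is to prove the matching bounds $d_{\min}(\mathcal{C})\le d_{\text{TB}}$ and $d_{\min}(\mathcal{C})\ge d_{\text{TB}}$; the hypothesis $d_{\text{TB}}\le 2\delta$ enters only in the lower bound. For the upper bound, one first checks that performing the PCT within each local group leaves the local code an MSR code with parameters $((n_\ell,r,n_\ell-1),(\alpha,\beta),K_\ell=r\alpha)$ and keeps the local codes mutually disjoint, so $\mathcal{C}$ is a vector code with $(r,\delta)$ all-symbol MSR locality and scalar dimension $K=k\alpha$. Substituting into \eqref{eq:msr_locality_bound} and using that the underlying TB code meets \eqref{eq:kam_bound} with equality gives $d_{\min}(\mathcal{C})\le n-k+1-(\lceil k/r\rceil-1)(\delta-1)=d_{\text{TB}}$, unconditionally.

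For the lower bound, fix a nonzero $\mathbf{c}\in\mathcal{C}$ and call a local group \emph{active} if $\mathbf{c}$ has a nonzero thick column in it. Since each local code has minimum distance $\delta$, every active group carries at least $\delta$ nonzero thick columns, and distinct groups occupy disjoint thick columns; hence if $\mathbf{c}$ has two or more active groups then $\mathrm{wt}(\mathbf{c})\ge 2\delta\ge d_{\text{TB}}$. This is the only place $d_{\text{TB}}\le 2\delta$ is used, and it leaves only the case of exactly one active group, say $i_0$.

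In that case, since the PCT is invertible and all coupled symbols vanish outside $i_0$ (hence, by invertibility of the coupling matrix $C$, all uncoupled symbols vanish there too), the uncoupled word $\mathbf{b}$ obtained from $\mathbf{c}$ is also supported on $i_0$; by construction each of its $\alpha$ layers is a TB codeword whose support lies inside the single coset $\mathcal{A}^{(i_0)}$. A CRT/degree inspection of the message space $\mathcal{M}$ from Section~\ref{sec:tb_code_recap} shows that when $\lceil k/r\rceil<\nu$ (equivalently $d_{\text{TB}}>n_\ell$) no such nonzero codeword exists, contradicting $\mathbf{c}\neq 0$; and when $\lceil k/r\rceil=\nu$ (equivalently $d_{\text{TB}}\le n_\ell$) these codewords run precisely over an $[n_\ell,k_0]$ MDS code on $\mathcal{A}^{(i_0)}$, where $k_0=k-(\nu-1)r$ and one checks that $d_{\text{TB}}=n_\ell-k_0+1$. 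The theorem thus reduces to showing that the image under the PCT of $\alpha$ stacked layers of an $[n_\ell,k_0]$ MDS code has minimum distance at least $n_\ell-k_0+1$. I would argue this by contradiction: if such a coupled codeword had a set $\mathcal{P}$ of at least $k_0$ zero thick columns, Corollary~\ref{cor:minimum_distance_layer0} furnishes an uncoupled layer vanishing on $\mathcal{P}$, which then has $\ge k_0$ zero coordinates in an $[n_\ell,k_0]$ MDS code and is identically zero; Lemma~\ref{lem:coupling_lemma} next forces the whole box of layers $\{\underline{z}:z_y\in\mathcal{P}_y\ \forall y\}$ to vanish, and one propagates this outward one coordinate at a time through the $2\times2$ coupling relations, each step producing a fresh layer that again vanishes on all of $\mathcal{P}$ (hence on $\ge k_0$ coordinates, hence everywhere) until all $\alpha$ layers are zero and $\mathbf{b}=0$, a contradiction.

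The upper bound and the two-active-group sub-case are routine. The crux is the final step: organizing the induction that sweeps the zero pattern across all $\alpha=(n_\ell-r)^{n_\ell/(n_\ell-r)}$ layers through the coupling, ensuring that every layer reached along the way still inherits at least $k_0$ coordinate-zeros so that MDS-ness collapses it to zero. Getting this bookkeeping right, over the $t$ coordinate directions and over the choices of intermediate layers, is where the real work lies.
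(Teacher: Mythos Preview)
Your overall strategy---upper bound via \eqref{eq:msr_locality_bound}, then lower bound by contradiction with the two-group/one-group split---matches the paper's proof exactly. The paper is, however, far more direct in the one-active-group case: it simply applies Corollary~\ref{cor:minimum_distance_layer0} to the zero thick columns $\mathcal{P}$ inside the active group, obtaining an uncoupled layer $\underline{z}'$ that (together with the identically-zero uncoupled data outside $i_0$) is a TB codeword of Hamming weight $\le d_{\min}(\mathcal{C})<d_{\text{TB}}$, and declares this the contradiction. Your CRT case split on whether $d_{\text{TB}}>n_\ell$ and the identification of the one-group subcode as an $[n_\ell,k_0]$ MDS code are correct but unnecessary: every uncoupled layer is already a \emph{global} TB codeword, so once a layer vanishes on $\mathcal{P}$ (and is zero outside $i_0$), its weight is $<d_{\text{TB}}$ and the TB minimum distance kills it directly. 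You can therefore delete the entire MDS detour and work with $d_{\text{TB}}$ throughout.

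Where you go beyond the paper is in observing that the single layer furnished by Corollary~\ref{cor:minimum_distance_layer0} might itself be the zero TB codeword, which is not yet a contradiction; you then propose a propagation to force all $\alpha$ layers to zero. The paper's one-line invocation of the corollary does not make this step explicit. Your instinct is right, and the ``bookkeeping'' you flag is short once organized as follows: induct on $|\{y:z'_y\notin\mathcal{P}_y\}|$. The base case is Lemma~\ref{lem:coupling_lemma} (choose $z'_y\in\mathcal{P}_y$ whenever $\mathcal{P}_y\neq\emptyset$), giving a layer that vanishes on $\mathcal{P}$ and is hence zero by the TB distance. For the inductive step, suppose only $z'_{y_0}\notin\mathcal{P}_{y_0}$ is ``bad''; for each $x'\in\mathcal{P}_{y_0}$ the coupling pairs $B(x',y_0,\underline{z}')$ with $B(z'_{y_0},y_0,\underline{z}'(x',y_0))$, and the latter sits in a layer with one fewer bad coordinate, hence is zero by induction. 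Since also $A(x',y_0,\underline{z}')=0$, invertibility of $C$ forces $B(x',y_0,\underline{z}')=0$. Thus layer $\underline{z}'$ again vanishes on all of $\mathcal{P}$, hence is zero, and the induction continues through all $\alpha$ layers.
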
		
\begin{proof}
	First we show that $d_{\text{min}}(\mathcal{C})\geq d_\text{TB}\leq 2\delta$. Assume to the contrary that $d_{\text{min}}(\mathcal{C})<d_\text{TB}\leq 2\delta$. Consider the vector codeword of $\mathcal{C}$ with hamming weight $d_{\text{min}}(\mathcal{C})<2\delta$. As each local MSR code has a minimum-distance of $\delta$, all the vector code-symbols having non-zero weights must be restricted within a local MSR code. From Corollary \ref{cor:minimum_distance_layer0}, there exists an underlying TB codeword with one local codeword having hamming weight $\leq d_{\text{min}}(\mathcal{C})<d_\text{TB}$ and all other local codewords as zeros, which is a contradiction. As $K=k\alpha$, where $k$ is the dimension of the TB code, \eqref{eq:msr_locality_bound} reduces to $d_{\text{min}}(\mathcal{C})\leq d_\text{TB}$. This completes the proof.
\end{proof}

\bibliographystyle{IEEEtran}
\bibliography{loc_and_dec_v2}

\end{document}